\documentclass[11pt]{article}

\usepackage[T1]{fontenc}
\usepackage[utf8]{inputenc}
\usepackage{lmodern}
\usepackage[letterpaper,margin=1in]{geometry}
\usepackage{amsmath,amssymb,amsthm,mathtools}
\usepackage{algorithm}
\usepackage{algorithmic}
\usepackage{tikz}
\usepackage{xcolor}
\usetikzlibrary{arrows.meta,decorations.pathreplacing,positioning}
\usepackage{microtype}
\usepackage[numbers,sort&compress]{natbib}
\usepackage[hidelinks]{hyperref}

\hypersetup{
  pdftitle={Sequential Phragmen Guarantees 2-Approximate Core Stability},
  pdfauthor={}
}
\newtheorem{theorem}{Theorem}
\newtheorem{lemma}{Lemma}
\newtheorem{corollary}{Corollary}

\theoremstyle{definition}
\newtheorem{definition}{Definition}
\theoremstyle{remark}

\newcommand{\wt}{\operatorname{wt}}

\title{Sequential Phragm\'en Guarantees 2-Approximate Core Stability}
\author{Hau Chan$^{1}$\quad Jianan Lin$^{2}$\quad Chenhao Wang $^{3,4}$\\[0.75em]
$1$ University of Nebraska-Lincoln\\
$2$ Rensselaer Polytechnic Institute\\
$3$ Beijing Normal University-Zhuhai\\
$4$ Beijing Normal-Hong Kong Baptist University
}
\date{}

\begin{document}

\maketitle

\begin{abstract}
In an approval-based committee election, a committee of size $k$ is selected from a set of candidates to represent voters of total weight $n$, each of whom has positive weight and approves a subset of the candidates. 
A size-$k$ committee is $\lambda$-stable if, for every nonempty subset $T$ of candidates, the total weight of voters who strictly prefer $T$ is less than $\lambda$ times the proportional share $n|T|/k$. 
Whether there exists a committee that is exactly stable, corresponding to $\lambda=1$, remains a major open problem in approval-based committee voting. 
Thus, a natural objective is to identify small values of $\lambda>1$ for which $\lambda$-stability can always be guaranteed.
We prove that weighted sequential Phragm\'en, a classical and natural rule, always returns a $2$-stable committee of size $k$.  Since the $\lambda$-core is the set of all $\lambda$-stable  committees, our result implies that the $2$-core is always nonempty. 
This improves upon the previously best-known guarantee, due to Gao, Sun, and Vondr\'ak~[EC~2026], that a 
$3.651$-stable committee  always exists.
\end{abstract}

\small\textbf{Keywords:} committee voting, approximate core, approval preferences

\section{Introduction}

Approval-based committee voting is a fundamental model of collective decision-making in which voters specify a subset of candidates that they approve of and a fixed-size committee is selected based on these approvals \cite{aziz2017}.  
It has broad applications in parliamentary elections, organizational committee selection, and hiring for selecting candidates that collectively reflect the preferences of the participating population \cite{brill2024}. 

In the weighted setting  of approval-based committee voting, we have a set $\mathcal C$ of candidates and a set $\mathcal N$ of voters.  
Each voter $i\in\mathcal N$ has a positive weight $w_i$ and approves a set $A_i\subseteq\mathcal C$, and the total voter weight is $n=\sum_{i\in\mathcal N}w_i$. 
The goal is to select a committee $W\subseteq\mathcal C$ of size $k$ that effectively represents the voters' preferences.

For a nonempty subset of candidates $T\subseteq\mathcal C$, consider the voters who
strictly prefer $T$ to $W$, that is, those satisfying
$|A_i\cap T|>|A_i\cap W|$. The quantity $n|T|/k$ is the proportional
voter-weight entitlement corresponding to the candidates in $T$. 
For $\lambda\ge1$, the committee $W$ is $\lambda$-stable if, for every such $T$, the total weight of these voters is less than
\(\lambda\frac{n|T|}{k}.
\)
The $\lambda$-core is the set of all $\lambda$-stable size-$k$ committees. 
{When $\lambda=1$, the 1-core is the exact core.}
Thus, increasing $\lambda$ relaxes
stability by raising the voter-weight threshold required for an alternative to block $W$.


Aziz et al.~\cite{aziz2017} were the first to introduce the notion of the core in approval-based committee voting. Although its nonemptiness remains unresolved in general, existence has been established under several restrictions. In particular, the exact core is nonempty when \(k\le 8\) or \(|\mathcal C|\le 15\)~\cite{peters2025}, and, more recently, for instances with at most seven distinct voter types~\cite{becker2026}. Using an automated-reasoning approach based on mixed-integer linear programming, \cite{berker2026} established additional existence results for restricted classes of instances. Nevertheless, whether the exact core is always nonempty remains a central open problem.

Because establishing the existence of the exact core has proved challenging, existing studies have considered relaxed notions of core stability, including $\lambda$-stability for committees with $\lambda>1$.  
Along this line, Jiang, Munagala, and Wang~\cite{jiang2020} developed an iterative rounding framework for constructing approximately stable committees from stable lotteries. 
Combining the exact stable-lottery result of  \cite{cheng2020} with the  framework of \cite{jiang2020} yields a 16-stable committee for approval-based committee voting.
More generally, Jiang, Munagala, and Wang~\cite{jiang2020} considered arbitrary monotone preferences over committees. They proved that a $32$-stable deterministic committee always exists in this general setting and conjectured that the factor can be improved to $2$.
Recently, Gao, Sun, and Vondr\'ak~\cite{gao2025} improved the $\lambda$-stability guarantee for approval-based committee voting by showing the existence of a $3.651$-stable committee. 


\paragraph{Our Contribution.} 

Building on this line of work on $\lambda$-stability, our goal is to obtain a voting rule that improves the $\lambda$-stability guarantee for approval-based committee voting. 
Toward this goal, we investigate a natural weighted version of the classical sequential Phragmén voting rule. 


The weighted sequential Phragm\'en voting rule builds a committee one candidate at a time while balancing abstract ``loads'' among voters. 
Each candidate has unit cost, which is shared among its approvers. 
In each round, the rule selects the candidate whose approvers can jointly cover this cost at the lowest common height of loads and raises every approving voter’s load below that height up to the common height.
Voters who have so far borne less load are therefore better positioned to support later candidates, giving the rule an intuitive proportional character. A detailed procedure is in Algorithm~\ref{alg:seq-phragmen}.


We show that the weighted sequential Phragm\'en voting rule always returns a $2$-stable committee for arbitrary positive real voter weights, regardless of how ties between candidates are broken. 
Our result improves the $3.651$-stability guarantee of \cite{gao2025}.
Our proof uses a purely combinatorial analysis, different from their Lindahl-equilibrium-and-rounding approach.

In Section~\ref{sec:model}, we define the considered approval-based committee voting model. In Section~\ref{sec:rule}, we state the weighted sequential Phragm\'en rule and the main theorem.  
Section~\ref{sec:proof} proves the theorem. 
Section~\ref{sec:discussion} discusses the results and open questions.



\paragraph{Additional Related Work.}
Sequential Phragm\'en has been extensively studied through proportional-representation axioms, which protect specially structured cohesive groups.
Brill et al.~\cite{brill2024} showed that sequential Phragm\'en satisfies proportional justified representation (PJR) and can be computed in polynomial time.
Peters and Skowron~\cite{petersskowron2020} further studied sequential Phragm\'en through alternative notions of proportionality, including laminar proportionality and priceability, and demonstrated that Phragm\'en and Proportional Approval Voting (PAV) capture different forms of proportional representation.
The proportionality guarantees of sequential Phragm\'en have also been studied quantitatively through the notion of proportionality degree~\cite{skowronproportionality}.
In contrast, core stability quantifies over every coalition and every alternative set of candidates, rather than only structured cohesive groups as in proportional-representation notions.
Our result therefore complements this line of work by establishing a constant approximate-core-stability guarantee for sequential Phragm\'en.

A different line of work considers alternative relaxations of core stability by relaxing the degree of utility improvement required for the members of a blocking coalition.
Peters and Skowron~\cite{petersskowron2020} showed that PAV satisfies a factor-$2$ utility-based approximation to the core and that the method of Equal Shares provides an $O(\log k)$ guarantee.
These guarantees differ from the \(\lambda\)-stability considered here, which instead relaxes the coalition-size threshold.

Core stability and its approximations have also been studied in more general models of public-good and committee selection.
Fain, Munagala, and Shah~\cite{fain2018} considered indivisible public goods with additive utilities and developed approximations to the core.
Munagala et al.~\cite{munagala2022} considered budget-constrained committee selection with more general utility functions and established approximate-core guarantees for additive and monotone submodular utilities.
These approximation notions and preference models differ from the approval-based $\lambda$-stability setting considered here.

\section{Model}\label{sec:model}

Let $\mathcal{C}$ be a finite nonempty set of candidates and let
$\mathcal{N}$ be a finite nonempty set of voters.  Each voter
$i\in\mathcal{N}$ has a strictly positive real weight $w_i$ and an approval
set $A_i\subseteq\mathcal{C}$.  Write
\[
  n=\sum_{i\in\mathcal{N}}w_i>0,
  \qquad
  \wt(S)=\sum_{i\in S}w_i\quad(S\subseteq\mathcal{N}).
\]
We assume that every candidate is approved by at least one voter, that is,
$\bigcup_{i\in\mathcal{N}}A_i=\mathcal{C}$.
An ordinary unweighted election is the special case in which every voter has
weight one.

Fix an integer $k$ with $1\le k\le |\mathcal{C}|$.  A committee is a set
$W\subseteq\mathcal{C}$ with $|W|=k$.  For every set
$D\subseteq\mathcal{C}$, voter $i$ has
approval utility
\[
  u_i(D)=|A_i\cap D|.
\]
Thus an instance consists of
$(\mathcal{C},\mathcal{N},(w_i,A_i)_{i\in\mathcal{N}},k)$, whereas a
committee is one feasible output for that instance.

A group of voters may object to a committee by proposing an alternative set
of candidates.  The following definition directly measures the total
weight of all voters who strictly prefer the alternative.

\begin{definition}[Approximate Stability and Core]\label{def:stability}
For $\lambda\ge1$, a nonempty alternative $T\subseteq\mathcal{C}$ is a
\emph{$\lambda$-blocking deviation} from a size-$k$ committee $W$ if
\[
  \wt\bigl(\{i\in\mathcal{N}:u_i(T)>u_i(W)\}\bigr)
  \ge\lambda\frac{n|T|}{k}.
\]
$W$ is \emph{$\lambda$-stable} if it admits no
$\lambda$-blocking deviation, namely, for every nonempty
$T\subseteq\mathcal{C}$,
\begin{equation}\label{eq:lambda-stability}
  \wt\bigl(\{i\in\mathcal{N}:u_i(T)>u_i(W)\}\bigr)
  <\lambda\frac{n|T|}{k}.
\end{equation}
The \emph{$\lambda$-core} is the set of all size-$k$
committees that are $\lambda$-stable.  The \emph{core}, also
called the \emph{exact core}, is the $1$-core.
\end{definition}

The alternative $T$ may overlap $W$, and empty approval sets are allowed.
For $\lambda\ge1$, alternatives with $|T|>k$ satisfy
\eqref{eq:lambda-stability} automatically.

\section{Weighted sequential Phragm\'en}\label{sec:rule}

We use the load, or water-filling, formulation of sequential
Phragm\'en~\cite{janson2016,brill2024}.  For a load vector
$\ell=(\ell_i)_{i\in\mathcal{N}}$ and a candidate $c$, define
\[
  F_{c,\ell}(x)=\sum_{i:c\in A_i}w_i\max\{0,x-\ell_i\}.
\]
The \emph{funding height} $\theta_{\ell}(c)$ of candidate $c$ is the unique solution of
\begin{equation}\label{eq:funding-height}
  F_{c,\ell}(\theta_{\ell}(c))=1.
\end{equation}

In Algorithm~\ref{alg:seq-phragmen}, intuitively, regard each voter's load as the current height of water in a
vessel whose width is the voter's weight.  To fund a candidate $c$, raise the
water levels of the voters who approve $c$ to a common height $x$, leaving
any level already above $x$ unchanged.  The resulting increase in water is
exactly $F_{c,\ell}(x)$, so $\theta_{\ell}(c)$ is the first height at which
these voters can jointly pay the candidate's unit cost.  At each step the
rule selects a candidate that reaches this height earliest and records the
payment by raising its approvers' loads accordingly.

\begin{algorithm}[H]
\caption{Weighted sequential Phragm\'en}\label{alg:seq-phragmen}
\begin{algorithmic}[1]
\REQUIRE An election
  $(\mathcal{C},\mathcal{N},(w_i,A_i)_{i\in\mathcal{N}})$ and a committee
  size $k$.
\ENSURE A committee $W\subseteq\mathcal{C}$ with $|W|=k$.
\STATE $W\leftarrow\varnothing$ and $\ell_i\leftarrow0$ for every
  $i\in\mathcal{N}$.
\WHILE{$|W|<k$}
  \STATE Compute $\theta_{\ell}(c)$ from~\eqref{eq:funding-height} for every
    $c\in\mathcal{C}\setminus W$.
  \STATE Choose any
    $c\in\arg\min_{d\in\mathcal{C}\setminus W}\theta_{\ell}(d)$.
  \STATE $W\leftarrow W\cup\{c\}$.
  \FORALL{$i\in\mathcal{N}$ with $c\in A_i$}
    \STATE $\ell_i\leftarrow\max\{\ell_i,\theta_{\ell}(c)\}$.
  \ENDFOR
\ENDWHILE
\STATE \textbf{return} $W$.
\end{algorithmic}
\end{algorithm}

Figure~\ref{fig:water-filling} illustrates one funding comparison.  The
chosen numbers make the added weighted load exactly one.

\begin{figure}[ht]
\centering
\small
\begin{tikzpicture}[
  x=1cm,
  y=2cm,
  font=\small,
  >={Latex[length=2mm]},
  vessel/.style={draw=black!75,thick},
  current/.style={fill=blue!22},
  added/.style={fill=orange!55}
]
  \fill[current] (0,0) rectangle (1,0.7);
  \fill[added]   (0,0.7) rectangle (1,1.2);
  \draw[vessel] (0,0) rectangle (1,1.38);

  \fill[current] (1.35,0) rectangle (3.35,1.1);
  \fill[added]   (1.35,1.1) rectangle (3.35,1.2);
  \draw[vessel] (1.35,0) rectangle (3.35,1.38);

  \fill[current] (3.70,0) rectangle (4.70,0.9);
  \fill[added]   (3.70,0.9) rectangle (4.70,1.2);
  \draw[vessel] (3.70,0) rectangle (4.70,1.38);

  \draw[dashed,thick] (-0.12,1.2) -- (4.86,1.2)
    node[pos=0,left=3pt] {$\theta_{\ell}(c)=1.2$};

  \node at (0.50,0.35) {$\ell_1=0.7$};
  \node at (2.35,0.55) {$\ell_2=1.1$};
  \node at (4.20,0.45) {$\ell_3=0.9$};

  \draw[<->] (0,-0.08) -- (1,-0.08)
    node[midway,below=2pt] {$w_1=1$};
  \draw[<->] (1.35,-0.08) -- (3.35,-0.08)
    node[midway,below=2pt] {$w_2=2$};
  \draw[<->] (3.70,-0.08) -- (4.70,-0.08)
    node[midway,below=2pt] {$w_3=1$};
  \draw[decorate,decoration={brace,mirror,amplitude=4pt}]
    (0,-0.35) -- (4.70,-0.35)
    node[midway,below=5pt] {voters approving $c$};

  \fill[current] (0,1.52) rectangle (0.22,1.62);
  \node[anchor=west,inner sep=0pt] at (0.25,1.57) {current load};
  \fill[added] (2.25,1.52) rectangle (2.47,1.62);
  \node[anchor=west,inner sep=0pt] at (2.50,1.57) {load added for $c$};

  \node[
    draw,
    rounded corners,
    fill=black!3,
    align=center,
    text width=6.4cm,
    inner sep=6pt,
    anchor=west
  ] (funding) at (5.45,1.14) {%
    $F_{c,\ell}(1.2)$\\[2pt]
    $=1(1.2-0.7)+2(1.2-1.1)+1(1.2-0.9)=1$\\[3pt]
    Hence $\theta_{\ell}(c)=1.2$.
  };
  \draw[->,thick] (4.86,1.2) -- (funding.west);

  \node[
    draw,
    rounded corners,
    fill=green!12,
    align=center,
    text width=5.8cm,
    inner sep=5pt,
    below=4mm of funding
  ] (choice) {Select $c$ if
    $\theta_{\ell}(c)=\min_{d\in\mathcal C\setminus W}\theta_{\ell}(d)$.};
  \draw[->,thick] (funding) -- (choice);
\end{tikzpicture}
\caption{\small One water-filling step of weighted sequential Phragm\'en.  Vessel
width is voter weight, height is voter load, and the orange area is the
additional weighted load used to fund candidate $c$.}
\label{fig:water-filling}
\end{figure}

The next lemma records the basic accounting facts and justifies the
water-filling interpretation.

\begin{lemma}\label{lem:invariants}
Every funding height exists and is unique.  The selected funding heights are
positive and nondecreasing, and every step increases the total weighted load
by exactly one.  If $L$ is the funding height at which the $k$th (and final)
candidate is selected, then
\begin{equation}\label{eq:load-invariants}
  \sum_{i\in\mathcal{N}}w_i\ell_i=k,
  \qquad 0\le\ell_i\le L,
  \qquad L\ge\frac{k}{n}.
\end{equation}
\end{lemma}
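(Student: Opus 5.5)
I will verify the claims in the order they are stated, treating one step of Algorithm~\ref{alg:seq-phragmen} at a time and then summing over the $k$ steps.

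\emph{Existence and uniqueness of funding heights.} Fix a load vector $\ell$ and a candidate $c$. By the standing assumption $\bigcup_i A_i=\mathcal C$, the set $N_c=\{i:c\in A_i\}$ is nonempty. The function $F_{c,\ell}$ is continuous, piecewise linear and nondecreasing. For $x\le m_c:=\min_{i\in N_c}\ell_i$ it equals $0$. For $x>m_c$ its slope is at least $\min_{i\in N_c}w_i>0$, so it is strictly increasing there and tends to $\infty$. The intermediate value theorem therefore gives a solution of $F_{c,\ell}(x)=1$. Any solution satisfies $x>m_c$, so strict monotonicity on $(m_c,\infty)$ makes it unique. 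Since loads start at $0$ and will be shown to stay nonnegative, $\theta_\ell(c)>m_c\ge0$, so every selected height is positive.

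\emph{Each step adds exactly one unit of load.} When $c$ is chosen at height $\theta=\theta_\ell(c)$, voter $i\in N_c$ goes from $\ell_i$ to $\max\{\ell_i,\theta\}$. Its weighted increase is $w_i\max\{0,\theta-\ell_i\}$. Summing over $N_c$ gives $F_{c,\ell}(\theta)=1$. Induction over the $k$ steps then yields $\sum_i w_i\ell_i=k$ at termination. Loads never decrease, so $\ell_i\ge0$ throughout.

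\emph{Monotonicity of the selected heights.} This is the only step needing care. Let $\theta$ be the minimum funding height at the current step, attained by $c$, and let $\ell'$ be the updated loads. Since $\ell'_i=\max\{\ell_i,\theta\}$ or $\ell'_i=\ell_i$, we have $\ell'\ge\ell$ coordinatewise. Hence $F_{d,\ell'}(x)\le F_{d,\ell}(x)$ for every remaining candidate $d$ and every $x$. Taking $x=\theta_{\ell}(d)$ gives $F_{d,\ell'}(\theta_\ell(d))\le 1$. Because $F_{d,\ell'}$ is nondecreasing and its root is unique, $\theta_{\ell'}(d)\ge\theta_\ell(d)\ge\theta$. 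It follows that the next selected height is at least $\theta$.

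\emph{The two bounds involving $L$.} By the previous step $L$ is the largest selected height. Every load update either leaves $\ell_i$ unchanged or sets it to the current height, which is at most $L$. By induction $\ell_i\le L$ for all $i$ at the end. Finally,
\[
k=\sum_i w_i\ell_i\le L\sum_i w_i=Ln,
\]
so $L\ge k/n$.
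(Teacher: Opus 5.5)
Your proof is correct and follows essentially the same route as the paper. Both arguments use continuity and eventual strict monotonicity of $F_{c,\ell}$ for existence and uniqueness, coordinatewise growth of the loads to show funding heights weakly increase, the defining equation $F_{c,\ell}(\theta_\ell(c))=1$ for the one-unit increment, and $\ell_i\le L$ to get $k\le nL$. You also spell out two points the paper leaves implicit: positivity via $\theta_\ell(c)>\min_{i:c\in A_i}\ell_i\ge0$, and the step from $F_{d,\ell'}(\theta_\ell(d))\le1$ to $\theta_{\ell'}(d)\ge\theta_\ell(d)$ using uniqueness of the root.
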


\begin{proof}
For $c\in\mathcal{C}$, the funding function $F_{c,\ell}(x)$ is the total
additional
weighted load created by raising every approver of $c$ whose load is below
$x$ up to height $x$.  This function is continuous, is zero for all
sufficiently small $x$, and tends to infinity.  Once positive, it is strictly
increasing.  Equation~\eqref{eq:funding-height} therefore has a unique
solution.

Suppose an update changes the load vector from $\ell$ to $\ell'$.  Since
$\ell'_i\ge\ell_i$ for every voter, every remaining candidate $d$ satisfies
$F_{d,\ell'}(x)\le F_{d,\ell}(x)$ for all $x$.  Its funding height therefore
weakly increases.  The selected candidate had minimum height before the
update, so the sequence of selected heights is nondecreasing.
Equation~\eqref{eq:funding-height} says exactly that the update adds one unit
of total weighted load.  After $k$ steps, the total load is consequently $k$.
Every individual load is zero or a previously selected height, and hence is
at most $L$.  Thus $k=\sum_iw_i\ell_i\le nL$, proving
\eqref{eq:load-invariants}.
\end{proof}


For rationally encoded voter weights,
Algorithm~\ref{alg:seq-phragmen} can be implemented in polynomial time: each
funding height is obtained by sorting the current loads of the candidate's
approvers and solving the resulting piecewise-linear equation. The stability
proof, however, uses only algebraic properties of the weights and does not
require a finite encoding. Hence, the theorem below applies to arbitrary
positive real weights.

We can now state the contribution formally.  It is stronger than an
existence result: every execution allowed by the rule, including every choice
among tied minimum-height candidates, has the desired stability guarantee.

\begin{theorem}\label{thm:main}
Under arbitrary tie-breaking, weighted sequential Phragm\'en
returns a $2$-stable size-$k$ committee for every finite weighted approval
election.
\end{theorem}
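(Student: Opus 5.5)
The plan is to argue by contradiction. Suppose the output $W$ admits a nonempty $T$ with $\wt(B)\ge 2n|T|/k$, where $B=\{i:u_i(T)>u_i(W)\}$ is the set of deviators. We may assume $|T|\le k$. The tool is the funding-height certificate from Lemma~\ref{lem:invariants}. Consider any candidate $c\notin W$ and any moment of the run at which the current minimum funding height is $x$. At that moment $\theta_\ell(c)\ge x$, so $F_{c,\ell}(x)\le 1$, that is,
\[
  \sum_{i:\,c\in A_i} w_i\max\{0,x-\ell_i\}\le 1 .
\]
At the end of the run this holds with $x=L$. The aim is to sum these certificates over $c\in T\setminus W$ and compare the left-hand side with $\wt(B)$, using $L\ge k/n$ from \eqref{eq:load-invariants}.

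The first step handles the easy part, the deviators with small final load. Every $i\in B$ approves some $c\in T\setminus W$, because $|A_i\cap (T\setminus W)|>|A_i\cap(W\setminus T)|\ge 0$. Summing the final certificates over $T\setminus W$ gives
\[
  \sum_{i\in B} w_i\,|A_i\cap(T\setminus W)|\,(L-\ell_i)_+\le |T\setminus W| .
\]
Restrict to $B_{\mathrm{low}}=\{i\in B:\ell_i\le L/2\}$. Each such voter contributes at least $w_iL/2$, so $\wt(B_{\mathrm{low}})\le 2|T\setminus W|/L\le 2n|T\setminus W|/k$. This is already the factor $2$, and it is also where strictness should come from.

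The main obstacle is the high-load deviators, those with $\ell_i>L/2$. For them I would replace the single final snapshot by a level-by-level (integrated) version of the certificate. For each height $x\in(0,L]$, look at the moment the selected heights first reach $x$. Let $D_x$ be the set of deviators whose load at that moment is below $x$. Applying the certificate at that moment to the candidates of $T\setminus W$ and integrating over $x$ gives
\[
  \int_0^L \wt(D_x)\,dx\lesssim |T\setminus W|\cdot L .
\]
Here the bound has to charge a voter's appearances in $D_x$ to the slack $x-\ell_i(x)$. I expect this conversion, from membership in $D_x$ to an amount of slack, to need a sharper argument than the crude one sketched here.

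Conversely, a deviator $i$ with $m_i=|A_i\cap W|$ had its load raised exactly at the (at most $m_i$) times a candidate of $A_i\cap W$ was bought. Between consecutive raises the height keeps climbing while its load stays fixed. The idea is that $i$ spends a large total $x$-measure in $D_x$ unless it pays for many winners. The condition $|A_i\cap T|>m_i$ should be what lets us charge voter $i$'s high-load periods to its approved candidates of $T$. Those in $T\setminus W$ are covered by the certificates. Those in $T\cap W$ are covered by the unit cost they actually consumed, whose total is $|T\cap W|$ and enters through $\sum_i w_i\ell_i=k$.

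Adding the two charges should give $\wt(B)L<2|T|$. Combined with $L\ge k/n$, this yields $\wt(B)<2n|T|/k$, the desired contradiction. The delicate points are getting strict inequality, and making the per-voter charging for voters who approve several members of $T\cap W$ lose at most the factor $2$. Tie-breaking never enters the argument, since only the minimality of the selected heights is used.
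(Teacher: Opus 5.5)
There is a genuine gap, and it sits exactly at the step you flag. The surrounding structure would also not close even if that step were filled in as a separate bound.

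\textbf{The low/high split cannot be additive.} The final snapshot gives $L\,\wt(B_{\mathrm{low}})\le 2|T\setminus W|$, which already spends the entire factor-two budget attached to $T\setminus W$. When $T\cap W=\varnothing$, the target is $L\,\wt(B)<2|T\setminus W|$ with the same constant. Any additive bound for $B_{\mathrm{high}}$ would therefore have to be negative. Yet high-load deviators can exist, e.g.\ a voter approving one winner bought near height $L$ and two candidates of $T$. So all deviators must enter a single inequality. The snapshot also yields only $\le$, so strictness cannot come from there.

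\textbf{Your integrated certificate is false as stated.} Just before the selected heights reach $x$, every load is $0$ or a selected height below $x$. Hence $D_x=B$ for almost every $x$, and $\int_0^L\wt(D_x)\,dx=L\,\wt(B)$. Your bound would then give $\wt(B)\lesssim|T\setminus W|$. This already fails for $k=1$ with $60$ voters approving only $a$, $40$ approving only $b$, and $T=\{b\}$: there $\wt(B)=40$. Membership in $D_x$ carries no usable information; the certificate controls slack, not membership.

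\textbf{The missing idea is to integrate the slack itself.} Let $R=T\setminus W$ and let $\alpha_i(z)$ be the height elapsed since $i$ last paid.
\begin{itemize}
\item The certificate gives $\sum_{i\in B}w_iu_i(R)\alpha_i(z)\le|R|$ for almost every $z\in(0,L)$.
\item Since $\alpha_i$ rises linearly on each gap, $\int_0^L\alpha_i(z)\,dz=\tfrac12\sum_jd_{i,j}^2$. Here the $d_{i,j}$ are the gaps between successive heights at which $i$ paid, closed by a final gap ending at $L$.
\item Integrating yields $\sum_{i\in B}w_iu_i(R)\sum_jd_{i,j}^2<2|R|L$. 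The inequality is strict because near $z=0$ every $\alpha_i(z)=z$, so the summed certificate is slack there.
\item The deviation condition $u_i(R)\ge u_i(W\setminus T)+1$ says exactly that the gaps not ending at a candidate of $H=T\cap W$ number at most $u_i(R)$. Cauchy--Schwarz then gives $u_i(R)\sum_jd_{i,j}^2\ge(L-e_i)^2$, where $e_i$ is $i$'s payment toward $H$.
\item By the unit-cost payment identity, $\sum_{i\in B}w_ie_i\le|H|$. This comes from each winner's payments summing to one, not from $\sum_iw_i\ell_i=k$.
\item Using $(L-e_i)^2\ge L^2-2Le_i$ (or the paper's weighted Cauchy--Schwarz plus scalar lemma) gives $L\,\wt(B)<2|T|$. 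Then $L\ge k/n$ finishes.
\end{itemize}
In this argument the factor $2$ comes from the triangle area $\tfrac12 d^2$ of each ramp, not from a load threshold at $L/2$.
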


\section{Proof}\label{sec:proof}

The proof studies, for each voter, the
gaps between successive moments at which its load increases.  Each omitted candidate yields an upper bound on a weighted area under these gaps, which is the only source of
the strict inequality needed at the blocking threshold.  A hypothetical
deviation $T$ is then split into the candidates it retains from $W$ and the
candidates it adds.  Actual payments cover the retained part, while the area
inequality covers the added part.  The resulting load bound contradicts the
lower bound $L\ge k/n$ whenever the deviating coalition reaches twice its
proportional share.


Fix an execution.  Let $c_1,\ldots,c_k\in W$ be the candidates in
their actual selection order, and let
\[
  0<\theta_1\le\theta_2\le\cdots\le\theta_k=L
\]
be their funding heights.  For each voter $i$, let
\[
  p_{i,1}<p_{i,2}<\cdots<p_{i,u_i(W)}
\]
be precisely the selection indices $p$ for which $c_p\in A_i$, and define
\[
  \tau_{i,0}=0,
  \qquad
  \tau_{i,j}=\theta_{p_{i,j}}\quad(1\le j\le u_i(W)),
  \qquad
  \tau_{i,u_i(W)+1}=L.
\]
That is, $c_{p_{i,j}}$ is the $j$th candidate approved by voter $i$ 
among the selected winners in their selection order, 
and $\tau_{i,j}$ is the corresponding load of that voter when that candidate is selected.
The last term $\tau_{i,u_i(W)+1}$ is an artificial endpoint, even if the last approved winner is selected at height $L$.
Define the indexed \emph{history gaps}
\begin{equation*}
  d_{i,j}=\tau_{i,j}-\tau_{i,j-1}
  \quad(j=1,\ldots,u_i(W)+1),
\end{equation*}
namely, the load increase between successive approved winners.
They are nonnegative and sum to $L$.  Equal selection heights produce
separate zero gaps.  If $u_i(W)=0$, the sole gap has length $L$.

Immediately before $c_{p_{i,j}}$ is selected, voter $i$ has load
$\tau_{i,j-1}$.  Its payment toward that candidate is therefore
$w_i d_{i,j}$.  Consequently, for every selected candidate, the weighted
gaps ending at that candidate sum to exactly one over all its approvers.

For $z$ outside the finite set of selected heights, define
voter $i$'s credit age (i.e., the increase in height since $i$ last helped
fund a selected candidate)
by a piecewise linear function
\begin{equation*}
  \alpha_i(z)=z-\max\bigl(
    \{0\}\cup
    \{\tau_{i,j}:1\le j\le u_i(W),\ \tau_{i,j}<z\}
  \bigr).
\end{equation*}
Its values at the selected heights may be chosen arbitrarily.  On every
history gap it rises linearly from zero to the gap length.  Hence
\begin{equation}\label{eq:ramp-area}
  \int_0^L\alpha_i(z)\,dz
  =\frac12\sum_{j=1}^{u_i(W)+1}d_{i,j}^2.
\end{equation}

The following area lemma provides the central execution-history estimate.  It aggregates the
unused credit of any set of unelected candidates, with multiplicity when a
voter approves several of them.  The strict sign will later exclude a
coalition exactly at its blocking quota.

\begin{lemma}\label{lem:credit-area}
Let $W$, $L$, and $d_{i,j}$ be as above.
Let $R\subseteq\mathcal{C}\setminus W$ and
$S\subseteq\mathcal{N}$ be non-empty sets.  If
$M=\sum_{i\in S}w_i u_i(R)>0$, then
\begin{equation}\label{eq:credit-area}
  \sum_{i\in S}w_i u_i(R)
    \left(\sum_{j=1}^{u_i(W)+1}d_{i,j}^2\right)
  <2|R|L.
\end{equation}
\end{lemma}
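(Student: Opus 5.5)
The plan is to prove a per-candidate version of \eqref{eq:credit-area} and then sum it over $R$. Fix $r\in R$. Since $r\notin W$, $r$ is available at every step of the execution. Lemma~\ref{lem:invariants} says the selected heights are nondecreasing, so $r$ never has funding height below the height selected at that step. I will turn this into a pointwise bound on the credit ages $\alpha_i$ of the approvers of $r$, integrate it over $(0,L)$, and apply \eqref{eq:ramp-area}.

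\textbf{Pointwise bound.} Take $z\in(0,L)$ outside the finite set of selected heights. Let $p$ be the index with $\theta_p<z<\theta_{p+1}$, where $\theta_0=0$; such a $p\le k-1$ exists because $z<L=\theta_k$. Consider the load vector $\ell$ immediately before $c_{p+1}$ is selected.
\begin{itemize}
\item Voter $i$'s load $\ell_i$ is the largest $\tau_{i,j}$ among its approved winners with selection index at most $p$, or $0$ if there is none. All of these are at most $\theta_p<z$.
\item Every approved winner of $i$ with index at least $p+1$ has height at least $\theta_{p+1}>z$.
\end{itemize}
Hence $\ell_i$ is exactly the maximum in the definition of $\alpha_i(z)$, so $\alpha_i(z)=z-\ell_i>0$. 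Because $c_{p+1}$ was chosen as a minimizer, $\theta_\ell(r)\ge\theta_{p+1}>z$. The function $F_{r,\ell}$ is nondecreasing, and strictly increasing once positive, so $F_{r,\ell}(z)<F_{r,\ell}(\theta_\ell(r))=1$. (If $F_{r,\ell}(z)=0$ the inequality holds trivially.) Therefore
\[
  \sum_{i:\,r\in A_i} w_i\,\alpha_i(z)\;=\;F_{r,\ell}(z)\;<\;1
\]
for almost every $z\in(0,L)$.

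\textbf{Integration.} The function $1-\sum_{i:\,r\in A_i}w_i\alpha_i(z)$ is strictly positive almost everywhere on an interval of positive length $L>0$. Its integral is therefore strictly positive, giving
\[
  \sum_{i:\,r\in A_i}w_i\int_0^L\alpha_i(z)\,dz<L.
\]
By \eqref{eq:ramp-area} this becomes
\[
  \sum_{i:\,r\in A_i}w_i\sum_{j}d_{i,j}^2<2L.
\]

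\textbf{Summation over $R$.} Summing over $r\in R$, which is nonempty, gives
\[
  \sum_{i\in\mathcal N}w_i\,u_i(R)\sum_j d_{i,j}^2<2|R|L .
\]
Restricting the outer sum to $i\in S$ only removes nonnegative terms, which yields \eqref{eq:credit-area}. The hypothesis $M>0$ is not needed for this argument; it only records that the left side is a genuine weighted average later on.

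\textbf{Main obstacle.} The delicate step is the identification $\alpha_i(z)=z-\ell_i$, since it must use the load vector at the correct step. It relies on two facts: every approved winner selected after step $p$ has height strictly above $z$, and every approved winner selected up to step $p$ has height strictly below $z$. This is why the selected heights are excluded from the domain, and why equal heights with separate zero gaps cause no trouble. The only other point needing care is keeping the inequality strict through integration, which the positive-length argument above handles.
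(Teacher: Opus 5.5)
Your proof is correct and follows the paper's structure: for almost every $z$ you identify $\alpha_i(z)$ with $z-\ell_i$ for the state after the $p$ selections below $z$, bound the funding function of each omitted candidate at $z$, integrate using the ramp-area identity, sum over $R$, and discard voters outside $S$. The one real difference is where the strict sign comes from. You get $F_{r,\ell}(z)<1$ pointwise, because $F_{r,\ell}$ is strictly increasing once positive. The paper uses only $\le 1$ and takes the slack from the initial interval $(0,\varepsilon)$, where every $\alpha_i(z)=z$ and the summed left side is $Mz\le|R|/2$. Your route is slightly cleaner, gives a strict bound for each omitted candidate separately, and correctly shows that the hypothesis $M>0$ is not needed.
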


\begin{proof}
Fix $z\in(0,L)$ distinct from every selected height, and let $p$ be the
number of selected heights below $z$.  Because the heights are nondecreasing,
the execution state after step $p$ is exactly the state containing all
selections below $z$, and the next winner has current height
$\theta_{p+1}>z$.

Every $c\in\mathcal{C}\setminus W$ is still available in this state, and
its current height is at least $\theta_{p+1}>z$.  Evaluating its funding
function at $z$ therefore gives
\begin{equation}\label{eq:candidate-credit}
  \sum_{i:c\in A_i}w_i\alpha_i(z)\le1.
\end{equation}
Indeed, the current load of voter $i$ is its last approved selected height
below $z$, so the corresponding summand in the funding function is precisely
$w_i\alpha_i(z)$.  Thus the inequality holds for every
omitted candidate and almost every $z\in[0,L]$.

Sum~\eqref{eq:candidate-credit} over $c\in R$, discard voters outside $S$,
and reverse the finite sums.  Since $u_i(R)=|A_i\cap R|$,
\begin{equation*}
  \sum_{i\in S}w_i u_i(R)\alpha_i(z)\le \sum_{c\in R}\sum_{i:c\in A_i}w_i\alpha_i(z)\le |R|
  \quad\text{for almost every }z\in[0,L].
\end{equation*}
Before the first selected height, every $\alpha_i(z)$ equals $z$, and hence
the left-hand side equals $Mz$.  Set
\(
  \varepsilon=\min\{\theta_1/2,|R|/(2M)\}>0.
\)
Then $Mz\le |R|/2$ on $(0,\varepsilon)$.  Therefore
\[
  \int_0^L\left(|R|-\sum_{i\in S}w_i u_i(R)\alpha_i(z)\right)\,dz
  \ge \int_0^\varepsilon\bigl(|R|-Mz\bigr)\,dz
  \ge \frac{|R|\varepsilon}{2}>0.
\]
It follows that
\[
  \int_0^L\sum_{i\in S}w_i u_i(R)\alpha_i(z)\,dz<|R|L.
\]
Substituting \eqref{eq:ramp-area} and multiplying by two proves
\eqref{eq:credit-area}.  Repeated heights form a finite exceptional set and
only create indexed zero gaps, so they affect neither the integral nor the
strict initial interval.
\end{proof}

\paragraph{From blocking utilities to load.}

Consider an alternative that retains some winners and adds some omitted
candidates.  Voters who prefer it must obtain enough approved additions to
replace all approved winners that the alternative drops, plus one more.  The
next lemma translates this integer-valued utility comparison into a bound on
the coalition's load.  It is the bridge between the area estimate and core
stability.

\begin{lemma}\label{lem:blocker-load}
Let $W$, $L$, and $d_{i,j}$ be as above.
Let $H\subseteq W$, 
$\varnothing\ne R\subseteq\mathcal{C}\setminus W$, and
$\varnothing\ne S\subseteq\mathcal{N}$.
If every $i\in S$ satisfies
\begin{equation}\label{eq:blocker-antecedent}
  u_i(R)\ge u_i(W\setminus H)+1,
\end{equation}
then
\begin{equation*}
  L\wt(S)<|H|+\sqrt{2|R|L\wt(S)}.
\end{equation*}
\end{lemma}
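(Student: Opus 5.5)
The plan is to decompose each blocking voter's total load $L$ into history gaps. The gaps charged to retained winners in $H$ are controlled by actual payments. The remaining gaps are few, by \eqref{eq:blocker-antecedent}, so Cauchy--Schwarz together with Lemma~\ref{lem:credit-area} bounds them.

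\emph{Step 1 (splitting the gaps).} Fix $i\in S$. Since $\sum_{j=1}^{u_i(W)+1}d_{i,j}=L$, I will partition the indices $j$ into two classes:
\begin{itemize}
\item $J_i^H$, the indices with $j\le u_i(W)$ and $c_{p_{i,j}}\in H$;
\item $G_i$, all other indices, namely the gaps ending at winners in $W\setminus H$ together with the artificial final gap $j=u_i(W)+1$.
\end{itemize}
Then $|G_i|=u_i(W\setminus H)+1\le u_i(R)$ by \eqref{eq:blocker-antecedent}. Multiplying by $w_i$ and summing over $S$ gives
\[
L\,\wt(S)=\sum_{i\in S}w_i\sum_{j\in J_i^H}d_{i,j}+\sum_{i\in S}w_i\sum_{j\in G_i}d_{i,j}.
\]

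\emph{Step 2 (retained part).} Recall that $w_id_{i,j}$ is exactly voter $i$'s payment toward $c_{p_{i,j}}$, and that for each selected candidate these payments sum to one over all its approvers. Discarding the nonnegative payments of voters outside $S$, the first sum is at most $\sum_{c\in H}1=|H|$. Repeated heights cause no trouble here, since each index $j$ is tied to a distinct winner and zero gaps contribute nothing.

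\emph{Step 3 (added part).} For each $i$, Cauchy--Schwarz over $G_i$ gives
\[
\sum_{j\in G_i}d_{i,j}\le\sqrt{|G_i|\,Q_i}\le\sqrt{u_i(R)\,Q_i},\qquad Q_i=\sum_{j=1}^{u_i(W)+1}d_{i,j}^2 .
\]
A second, weighted Cauchy--Schwarz then yields
\[
\sum_{i\in S}w_i\sqrt{u_i(R)Q_i}\le\sqrt{\wt(S)}\,\sqrt{\sum_{i\in S}w_iu_i(R)Q_i}.
\]
Because $S\neq\varnothing$ and every $i\in S$ has $u_i(R)\ge1$, we get $M=\sum_{i\in S}w_iu_i(R)>0$, so Lemma~\ref{lem:credit-area} applies and the inner sum is strictly less than $2|R|L$. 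Since $\wt(S)>0$, this makes the bound strictly less than $\sqrt{2|R|L\,\wt(S)}$. Combining this with Step~2 gives the claimed inequality.

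The main care point is Step~1, where the bookkeeping must be exact. One has to check that the artificial final gap is counted in $G_i$, which is exactly what the ``$+1$'' in \eqref{eq:blocker-antecedent} pays for. One must also confirm that the strict inequality survives: it enters only through Lemma~\ref{lem:credit-area}, and the two Cauchy--Schwarz steps are weak, so they preserve it.
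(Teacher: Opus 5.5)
Your proposal is correct and follows the paper's proof essentially step for step. It uses the same partition of gaps into those ending at winners in $H$ and the rest (including the artificial final gap), the same payment bound $|H|$ for the retained part, and the same pair of Cauchy--Schwarz steps, with Lemma~\ref{lem:credit-area} supplying the only strict inequality; the one cosmetic difference is that you take square roots voter by voter before the weighted Cauchy--Schwarz step, whereas the paper squares the aggregate sum first, and the two are equivalent.
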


\begin{proof}
For every $i\in\mathcal N$, define
\[
  \mathcal E_i=
  \{j\in\{1,\ldots,u_i(W)\}:c_{p_{i,j}}\in H\}
\]
to be the set of gaps ending when voter $i$ helps fund an approved winner in
$H$. For every $i\in S$, additionally define
\[
  \mathcal G_i=
  \{j\in\{1,\ldots,u_i(W)\}:c_{p_{i,j}}\in W\setminus H\}
  \cup\{u_i(W)+1\}.
\]
Thus, for $i\in S$, $\mathcal G_i$ contains the gaps ending at approved winners in
$W\setminus H$, which the alternative drops, together with the artificial
final gap.  The gaps indexed by \(\mathcal E_i\) are charged to the actual payments for candidates in \(H\), whereas those indexed by \(\mathcal G_i\) are controlled using the approved candidates in \(R\).
These two classes contain every indexed gap, so
\begin{equation}\label{eq:gap-partition}
  \sum_{j\in\mathcal E_i}d_{i,j}
  +\sum_{j\in\mathcal G_i}d_{i,j}=L.
\end{equation}
Moreover, $|\mathcal G_i|=u_i(W\setminus H)+1$: it has one index for every
approved winner in $W\setminus H$ and one for the artificial final gap.  Cauchy--Schwarz and
\eqref{eq:blocker-antecedent} give
\begin{equation}\label{eq:individual-omitted}
  \left(\sum_{j\in\mathcal G_i}d_{i,j}\right)^2
  \le \bigl(u_i(W\setminus H)+1\bigr)
      \sum_{j\in\mathcal G_i}d_{i,j}^2
  \le u_i(R)\sum_{j=1}^{u_i(W)+1}d_{i,j}^2.
\end{equation}

For each candidate in $H$, the weighted gaps ending at that candidate are
its actual load payments and sum to one over all approving voters.  Restricting
to $S$ and summing over $H$ yields
\begin{equation}\label{eq:retained-bound}
  \sum_{i\in S}w_i\sum_{j\in\mathcal E_i}d_{i,j}
  \le \sum_{i\in\mathcal N}w_i\sum_{j\in\mathcal E_i}d_{i,j}
  =|H|.
\end{equation}
Weighted Cauchy--Schwarz, \eqref{eq:individual-omitted}, and
Lemma~\ref{lem:credit-area} imply
\begin{align}
  \left(\sum_{i\in S}w_i
    \sum_{j\in\mathcal G_i}d_{i,j}\right)^2
  &\le \wt(S)\sum_{i\in S}w_i
    \left(\sum_{j\in\mathcal G_i}d_{i,j}\right)^2 \notag\\
  &\le \wt(S)\sum_{i\in S}w_i u_i(R)
       \left(\sum_{j=1}^{u_i(W)+1}d_{i,j}^2\right) \notag\\
  &<2|R|L\wt(S).\label{eq:omitted-bound}
\end{align}
The area lemma applies because
$u_i(R)\ge u_i(W\setminus H)+1\ge1$ for every $i\in S$, so its positivity
condition holds.  Finally, summing~\eqref{eq:gap-partition} with weights over
$S$ and using \eqref{eq:retained-bound} and \eqref{eq:omitted-bound} yields
\begin{align*}
  L\wt(S)
  &=\sum_{i\in S}w_i\sum_{j\in\mathcal E_i}d_{i,j}
    +\sum_{i\in S}w_i\sum_{j\in\mathcal G_i}d_{i,j}\\
  &<|H|+\sqrt{2|R|L\wt(S)}.
\end{align*}
This also covers $H=\varnothing$: then the first sum and $|H|$ are both
zero, while the area lemma retains the strict sign.
\end{proof}

\paragraph{Excluding factor-two blockers.}

The following elementary algebraic lemma
shows that the blocker-load inequality cannot coexist with the amount of load
forced by a coalition at twice its proportional share.  Its equality case
identifies exactly why strictness in Lemma~\ref{lem:credit-area} is needed.

\begin{lemma}\label{lem:scalar}
For $x,y,z\ge0$ with $x+y>0$,
\begin{equation*}
  z\ge2(x+y)
  \quad\Longrightarrow\quad
  x+\sqrt{2yz}\le z.
\end{equation*}
Under the premise, equality in the conclusion is possible only when
$x=0$ and $z=2y$.
\end{lemma}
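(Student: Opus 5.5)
Since $z\ge 2(x+y)$ and $x\ge 0$, we have $2y\le z-2x\le z$, so $2yz\le z^2$ and $\sqrt{2yz}\le z$. The plan is therefore to reduce the claim to the equivalent inequality $\sqrt{2yz}\le z-x$ and then square. Note that $z-x\ge 2x+2y-x=x+2y\ge0$, so both sides are nonnegative and squaring is reversible. The claim becomes
\[
  2yz\le (z-x)^2 = z^2-2xz+x^2 .
\]

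To verify this, rearrange it as $z^2-2z(x+y)+x^2\ge0$. Using the premise $z\ge 2(x+y)$, I will write
\[
  z^2-2z(x+y) = z\bigl(z-2(x+y)\bigr)\ge 0 .
\]
Adding $x^2\ge0$ then gives the desired inequality. This establishes $x+\sqrt{2yz}\le z$.

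For the equality case, equality in the conclusion forces equality in the squared inequality, so $z(z-2(x+y))+x^2=0$. Both summands are nonnegative, hence $x=0$ and $z(z-2y)=0$. If $z=0$, then the premise gives $x+y\le 0$, contradicting $x+y>0$. So $z=2y$, which is the stated equality case.

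The only delicate point is confirming that $z-x\ge0$ before squaring. The premise handles this directly, since $z-x\ge x+2y\ge0$. The hypothesis $x+y>0$ is used only to exclude the degenerate case $z=0$ in the equality analysis.
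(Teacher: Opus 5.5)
Your proof is correct and follows the paper's argument essentially step for step: you check $z-x\ge x+2y\ge 0$ to justify squaring, use the identity $(z-x)^2-2yz=z\bigl(z-2(x+y)\bigr)+x^2$, and get the equality case from the vanishing of both nonnegative terms, with $z>0$ forced by the premise. Your opening observation that $\sqrt{2yz}\le z$ is not needed, but it does no harm.
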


\begin{proof}
The premise gives $z\ge2(x+y)>0$ and $z-x\ge x+2y>0$.  It is therefore
valid to square the desired inequality.  We obtain
\[
  (z-x)^2-2yz
  =z\bigl(z-2(x+y)\bigr)+x^2\ge0.
\]
For equality, both nonnegative terms must vanish.  Since $z>0$, this forces
$x=0$ and $z=2(x+y)=2y$.
\end{proof}

We now combine Lemmas~\ref{lem:invariants}-\ref{lem:scalar} to exclude factor-two blockers.  Overlap between the alternative and the elected committee
is handled by cancelling their common candidates voter by voter.

\begin{proof}[Proof of Theorem~\ref{thm:main}]
Let $L$ be the final height.  Fix a nonempty alternative $T$ and put
\[
  S=\{i\in\mathcal{N}:u_i(T)>u_i(W)\}.
\]
If $S=\varnothing$, the conclusion is immediate.  If $2|T|>k$, then
$\wt(S)\le n<2n|T|/k$, so this range is automatic.  The boundary
$2|T|=k$ remains in the argument below.

Suppose for a contradiction that $T$ is a $2$-blocking deviation.  Then
\begin{equation}\label{eq:block-assumption}
  \wt(S)\ge\frac{2n|T|}{k}.
\end{equation}
Partition $T$ into
\(
  H=T\cap W\) and 
\(R=T\setminus W
\) so that
$|T|=|H|+|R|$.  Because $S$ is nonempty, $R$ is nonempty: if $R$ were
empty, then $T\subseteq W$ and no voter could strictly prefer $T$.

For every $i\in S$, cancellation of the common part $H$ gives
\[
  u_i(T)-u_i(W)
  =u_i(R)-u_i(W\setminus H)>0.
\]
Approval utilities are integral, and therefore
\begin{equation}\label{eq:integral-cancellation}
  u_i(R)\ge u_i(W\setminus H)+1.
\end{equation}

Lemma~\ref{lem:blocker-load}, applied with
\eqref{eq:integral-cancellation}, gives
\begin{equation}\label{eq:strict-side}
  L\wt(S)<|H|+\sqrt{2|R|L\wt(S)}.
\end{equation}
On the other hand, Lemma~\ref{lem:invariants} and
\eqref{eq:block-assumption} imply
\begin{equation*}
  L\wt(S)
  \ge\frac{k}{n}\frac{2n|T|}{k}
  =2|T|=2(|H|+|R|).
\end{equation*}
Lemma~\ref{lem:scalar} now yields
$|H|+\sqrt{2|R|L\wt(S)}\le L\wt(S)$,
contradicting~\eqref{eq:strict-side}.
Thus no nonempty $T$ can be a $2$-blocking deviation.
\end{proof}

The theorem immediately gives an existence statement for the approximate
core.

\begin{corollary}\label{cor:nonempty}
The $2$-core of every finite weighted approval election is
nonempty. Moreover, for rationally encoded voter weights, a $2$-stable
committee can be computed in polynomial time using weighted sequential
Phragm\'en.
\end{corollary}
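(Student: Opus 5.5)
The corollary has two parts, and I would handle them separately.

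For nonemptiness, I would rely on Algorithm~\ref{alg:seq-phragmen} always terminating with a size-$k$ committee. Fix any finite weighted approval election with $1\le k\le|\mathcal C|$. Because every candidate has at least one approver, Lemma~\ref{lem:invariants} gives a well-defined funding height for each remaining candidate. The set $\mathcal C\setminus W$ is nonempty whenever $|W|<k\le|\mathcal C|$, so the $\arg\min$ in each iteration is taken over a finite nonempty set. The algorithm therefore adds one new candidate per iteration and stops after exactly $k$ iterations. Its output $W$ is a size-$k$ committee, and Theorem~\ref{thm:main} says $W$ is $2$-stable. Hence $W$ lies in the $2$-core, so the $2$-core is nonempty.

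For computation, I would bound both the number of iterations and the cost of each one. There are $k\le|\mathcal C|$ iterations. In each iteration, and for each remaining candidate $c$, computing $\theta_\ell(c)$ means solving $F_{c,\ell}(x)=1$. The function $F_{c,\ell}$ is piecewise linear and increasing, with breakpoints at the current loads of $c$'s approvers. I would sort those loads, walk through the breakpoints while tracking the cumulative weight $\sum w_i$ of voters already below $x$, and find the segment on which $F_{c,\ell}$ reaches $1$. On that segment the solution is $x=(1+\sum w_i\ell_i)/\sum w_i$ over the active voters. This takes polynomially many arithmetic operations, and the minimization and load updates are linear.

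The main obstacle is bit complexity. Every load is either $0$ or a previous funding height, and each funding height is a rational function of earlier heights and of the weights. So I must check that the encoding lengths do not blow up over the $k$ rounds. I would argue this by induction. Each selected height has the form $(1+\sum_{i\in P}w_i\ell_i)/\sum_{i\in P}w_i$. Expanding the $\ell_i$ recursively, each height is a rational number whose denominator divides a product of at most $k$ subset sums of weights, and whose numerator is bounded correspondingly. This gives encoding length polynomial in $k$ and the input size. If that bound turns out too crude, the fallback is to note that the paper already asserts the polynomial-time implementation immediately after Lemma~\ref{lem:invariants}, and to cite that remark together with Theorem~\ref{thm:main}.
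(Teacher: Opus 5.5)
Your proposal is correct and follows the paper's route: you obtain nonemptiness by applying Theorem~\ref{thm:main} to the output of Algorithm~\ref{alg:seq-phragmen}, which always terminates with a size-$k$ committee, and you obtain polynomial time from the sort-and-solve piecewise-linear computation of each funding height noted after Lemma~\ref{lem:invariants}. Your bit-length induction is a detail the paper leaves implicit, and it goes through: after clearing a common denominator of the weights, each height's denominator divides a product of at most $k$ integer subset sums, and the bound $L\le k/\min_i w_i$ controls the numerators.
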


\section{Discussion and open questions}\label{sec:discussion}

Sequential Phragm\'en provides a universal $2$-core guarantee
for approval-based committee elections, improving the group-size approximation
factor from $3.651$~\cite{gao2025} to $2$. The guarantee holds under arbitrary
tie-breaking and for all positive voter weights. In contrast to the previous
Lindahl-and-rounding approach, the proof is combinatorial and draws directly
on a single sequential water-filling execution, showing how the sequence of load
updates controls ex-post group stability.

The result does not settle exact-core nonemptiness.  Nor is factor two known
to be tight for approval preferences: the general lower bound approaching
two for arbitrary monotone preferences~\cite{jiang2020} does not rule out an
exact core for approval preferences.  Our proof uses unit-cost candidates and
integer-valued approval utilities in an essential way.  Candidate costs
would change both the one-unit payment identity and the scalar accounting,
while general cardinal utilities would remove the integral one-unit gain in
\eqref{eq:integral-cancellation}.  Whether the load-update technique
can be sharpened toward the exact core or adapted to non-unit candidate costs remains
open.

Two questions seem particularly natural.  First, can correlations between
different omitted candidates strengthen the sum-of-squares estimate \eqref{eq:credit-area} enough
to beat factor two or reach the exact core?  Second, is there an analogue of
the credit-area argument for candidates with heterogeneous costs?  Both
directions require new information beyond applying the candidate-credit
inequality~\eqref{eq:candidate-credit} separately to each omitted candidate.

\paragraph{AI Disclosure}
OpenAI Codex was used to assist with proof auditing, literature checking,
and drafting the manuscript.  The tool materially affected
the exposition throughout the paper.  The human authors are responsible
for independently verifying the correctness, originality, and accuracy of
all content.

\end{document}